\documentclass[lettersize,journal]{IEEEtran}

\usepackage{amsmath,amssymb,amsfonts}
\usepackage{algorithmic}
\usepackage{algorithm}
\usepackage{graphicx}
\usepackage{textcomp}
\usepackage{xcolor}
\usepackage{cite}
\usepackage{bm}

\newtheorem{theorem}{Theorem}
\newtheorem{remark}{Remark}

\usepackage{balance}

\begin{document}

%\title{Pinching-Antenna Systems With Measured Radiation Patterns and Coupling Efficiency Optimization}
\title{Measured-Pattern-Aware Pinching-Antenna Systems With Coupling-Efficiency Optimization}

\author{Hao Feng, Hui Yang, Ming Zeng, Yulei Wang, Ebrahim Bedeer and Nian Xia

    \thanks{H. Feng is with Hunan Institute of Engineering, Xiangtan, China, and also with Donghua University, Shanghai, China (e-mail: 1219001@mail.dhu.edu.cn).}
    %\thanks{H. Feng is with School of Textile and Clothing, Hunan Institute of Engineering, Xiangtan, China, and also with the College of Textiles, Donghua University, Shanghai, China (e-mail: 1219001@mail.dhu.edu.cn).}

    \thanks{H. Yang is with Hunan Institute of Science and Technology, Yueyang, China (email: achelyal@163.com).}
    
    \thanks{M. Zeng is with Laval University, Quebec City, Canada (email: ming.zeng@gel.ulaval.ca).}

    \thanks{Y. Wang is with South-Central Minzu University, Wuhan, China (email: ylwang@mail.scuec.edu.cn).}

    \thanks{E. Bedeer is with University of Saskatchewan, Saskatoon, SK, Canada (email: e.bedeer@usask.ca).}

    \thanks{N. Xia is with Nanjing Normal University, Nanjing, China (e-mail: nian.xia@nnu.edu.cn).} 
    }

\maketitle

%\begin{abstract}
%Pinching-antenna (PA) systems are commonly analyzed with isotropic radiation models and fixed coupling efficiencies. However, practical PAs can exhibit strongly directional radiation patterns, and the coupling efficiency determines how guided power is extracted and distributed along the waveguide. This letter studies measurement-driven PA systems with explicit coupling-efficiency optimization. We develop a channel model that incorporates an externally obtained radiation pattern, waveguide attenuation, and sequential power extraction. For the single-PA case, we show that the optimal PA position maximizes the joint effect of directional gain, waveguide attenuation, and free-space path loss, and derive the minimum coupling efficiency required to outperform a fixed isotropic antenna. For the finite multi-PA case, we consider main-beam-aware and phase-matched PA placement. Under uniform coupling, we derive a one-dimensional optimality condition and show that, for similar effective channel amplitudes, the optimal coupling efficiency decreases with the number of PAs. For independently controllable coupling efficiencies, we obtain a closed-form solution in which the radiated power fraction is proportional to the directional channel power gain. Numerical results using a representative measured PA pattern validate the benefits of measurement-aware placement and coupling-efficiency optimization.
%\end{abstract}

\begin{abstract}
Pinching-antenna (PA) systems have been widely investigated as a flexible architecture for waveguide-enabled wireless transmission. Existing analytical models, however, often rely on isotropic radiation assumptions and simplified coupling-efficiency settings, which may overlook two practical design factors: the geometry-dependent radiation pattern of each PA and the sequential extraction of guided power along the waveguide. In this paper, we propose a measured-radiation-pattern-aware PA framework that incorporates an externally obtained radiation pattern, waveguide attenuation, and coupling-dependent power extraction. For a single PA, the resulting placement rule balances directional gain, waveguide loss, and free-space path loss, leading to a coupling-efficiency threshold for outperforming a fixed isotropic antenna. For multiple PAs, we study phase-matched placement and coupling-efficiency design under both uniform and independently controllable coupling. The uniform-coupling case yields a one-dimensional optimality condition and reveals that the preferred coupling efficiency decreases as more phase-matched PAs participate in coherent combining. The independently controllable case admits a closed-form power-allocation structure, where stronger effective directional channels receive larger radiated power fractions. Numerical results based on a representative measured PA radiation pattern demonstrate the importance of jointly accounting for measured-radiation-pattern-aware placement and coupling-efficiency optimization.
\end{abstract}

\begin{IEEEkeywords}
Pinching antenna, measured radiation pattern, coupling efficiency, waveguide attenuation and phase matching.
\end{IEEEkeywords}

\section{Introduction}
Pinching-antenna (PA) systems have recently emerged as a flexible antenna architecture for future wireless networks \cite{Atsushi_22}. By placing small dielectric perturbations near a dielectric waveguide, part of the guided electromagnetic energy can be coupled out and radiated into free space. Since PAs can be activated at different locations along a pre-deployed waveguide, they provide new spatial degrees of freedom for improving line-of-sight connectivity and reducing access distance \cite{Yang_WCM25,liu2025pinching, zeng2025_WCM}.

Most existing analytical studies of PA systems rely on simplified radiation and coupling models. In particular, activated PAs are often modeled as isotropic radiators \cite{Tegos_2025, Zeng_COMML25,  fu2025}. Under this assumption, the PA-user channel is mainly determined by free-space path loss, waveguide attenuation, and phase coherence, and PA placement is commonly interpreted as a distance-dominated problem. Meanwhile, the coupling efficiency between the waveguide and the PA is often treated in a simplified manner. For example, it is commonly set to one in the single-PA case \cite{tyrovolas2025, zeng2025EETVT, hu2025}, assumed to produce equal power sharing in multi-PA systems \cite{ding2024, wang2024, Zhao_TCOM25}, or modeled through a fixed proportional extraction rule \cite{Wang_TCOM25}. However, the optimization of coupling efficiency for a given finite number of activated PAs has not been sufficiently investigated.

These two simplifications are closely related in practical PA systems. On the one hand, PA radiation can be highly directional and geometry-dependent \cite{Li_COMML26}. As a result, the closest PA to the user is not necessarily the best choice, since the user may lie outside its strong radiation direction. A farther PA may instead provide a stronger effective channel if the user is located near its main beam. On the other hand, once multiple directional PAs are activated along the same waveguide, their useful contributions depend not only on their directional channel gains but also on how the guided power is extracted and distributed among them. A large coupling efficiency benefits earlier PAs but leaves less residual power for downstream PAs, whereas a smaller coupling efficiency can support coherent combining across more PAs. Therefore, measured radiation patterns and coupling-efficiency design should be considered jointly rather than as separate issues.

Motivated by these observations, this paper studies directional PA systems with externally obtained radiation patterns and explicit coupling-efficiency optimization. The main contributions are summarized as follows:
\begin{itemize}
\item We develop a measured-radiation-pattern-aware PA channel model that incorporates directional radiation gain, waveguide attenuation, and sequential coupling loss. 
%The proposed model treats the radiation pattern as an external input and is therefore applicable to measured or simulated PA radiation patterns.
\item For the single-PA case, we derive the optimal location criterion and the minimum coupling efficiency required for the optimized directional PA to outperform a fixed isotropic antenna.
\item For the finite multi-PA case, we consider main-beam-aware and phase-matched placement, derive the optimality condition for uniform coupling, and obtain a closed-form solution for independently controllable coupling efficiencies.
\end{itemize}

\section{System Model and Single-PA Design}

\subsection{Measured-Radiation-Pattern-Aware PA Model}
%Consider a dielectric waveguide deployed along the $x$-axis, with the feed point at $x=0$ and signal propagation along the positive $x$ direction. A PA can be activated at any candidate location $x\in\mathcal{X}$, where $\mathcal{X}$ denotes the feasible activation region. The single-antenna user is located at ${\bf u}=(x_u,y_u)$, and the PA-user distance is
%\begin{equation}
%d(x)=\sqrt{(x-x_u)^2+y_u^2}.
%\end{equation}
%Let $\phi(x)$ be the departure direction from the PA to the user in the local PA coordinate system. The directional radiation gain toward direction $\phi$ is denoted by $G(\phi)$. In this work, $G(\phi)$ is treated as an external input obtained from simulation or measurement \cite{Li_COMML26}. Hence, the following model is not tied to a particular PA geometry and can be applied to any measured or simulated PA radiation pattern.
%Consider a dielectric waveguide deployed along the $x$-axis in a three-dimensional Cartesian coordinate system. The feed point is located at $(0,0,d_0)$, with $d_0$ denoting the waveguide height. 
Consider a dielectric waveguide deployed parallel to the $x$-axis in a three-dimensional Cartesian coordinate system. The feed point is located at $(0,0,d_0)$, where $d_0$ denotes the waveguide height, and the signal propagates along the positive $x$ direction.
A PA can be activated at any candidate waveguide coordinate $x\in\mathcal{X}$, where $\mathcal{X}$ denotes the feasible activation region, and its 3D position is denoted by ${\bf r}(x)=(x,0,d_0)$. 
%\begin{equation}
%{\bf r}(x)=(x,0,d_0).
%\label{eq:pa_position_3d}
%\end{equation}
The single-antenna user is located at ${\bf u}=(x_u,y_u,z_u)$. Hence, the PA-user distance is
\begin{equation}
d(x)=\left\|{\bf u}-{\bf r}(x)\right\|
=\sqrt{(x-x_u)^2+y_u^2+(z_u-d_0)^2}.
\label{eq:distance_3d}
\end{equation}

Denote the departure direction from the PA to the user by 
%Let ${\bf s}(x)=({\bf u}-{\bf r}(x))/d(x)$ denote the unit departure direction from the PA to the user. In the local PA coordinate system, this direction is represented by 
the azimuth-elevation pair $\boldsymbol{\Omega}(x)=(\varphi(x),\vartheta(x))$, given by
%where, for the coordinate convention in \eqref{eq:pa_position_3d},
\begin{subequations} \label{eq:direction_angles_3d}
\begin{align}
\varphi(x)&=\operatorname{atan2}\left(y_u,x_u-x\right),\\
\vartheta(x)&=\operatorname{atan2}\left(z_u-d_0,\sqrt{(x-x_u)^2+y_u^2}\right),
\end{align}
\end{subequations}
%where $\operatorname{atan2}(y,x)$ denotes the two-argument inverse tangent function that returns the angle of the vector $(x,y)$ with the correct quadrant.
where $\operatorname{atan2}$ denotes the two-argument inverse tangent function. 

The directional radiation gain toward $\boldsymbol{\Omega}(x)$ is denoted by $G(\boldsymbol{\Omega}(x))$. In this work, $G(\boldsymbol{\Omega}(x))$ is treated as an external input obtained from simulation or measurement \cite{Li_COMML26}. Hence, the following model is not tied to a particular PA geometry and can be applied to any measured or simulated 3D PA radiation pattern. If only a measured angular cut is available, $G(\boldsymbol{\Omega}(x))$ can be replaced by the corresponding gain value on that cut.

For a single PA with coupling efficiency $\rho\in(0,1]$, the channel coefficient is modeled as \cite{ding2024}
\begin{equation}
h(x;\rho)=
\sqrt{\rho G(\boldsymbol{\Omega}(x))}
\frac{\sqrt{\beta_0}e^{-\alpha_{\rm w}x}}{d(x)}
e^{-j\left(\frac{2\pi}{\lambda}d(x)+\frac{2\pi}{\lambda_g}x\right)},
\label{eq:single_channel_loss}
\end{equation}
where $\beta_0=\lambda^2/(16\pi^2)$, $\lambda$ and $\lambda_g$ are the free-space and guided wavelengths, respectively, and $\alpha_{\rm w}$ is the waveguide amplitude attenuation coefficient. The phase term includes both the free-space propagation phase and the guided-wave phase accumulated from the feed point to the PA. For transmit power $P$ and noise power $\sigma^2$, the received SNR is
\begin{equation}
\gamma(x;\rho)=
\frac{P\beta_0\rho}{\sigma^2}
\frac{G(\boldsymbol{\Omega}(x))e^{-2\alpha_{\rm w}x}}{d^2(x)} .
\label{eq:single_snr_loss}
\end{equation}

\subsection{Single-PA Placement and Coupling Threshold}
From \eqref{eq:single_snr_loss}, the optimal radiation pattern-aware PA location is
\begin{equation}
x^\star=\arg\max_{x\in\mathcal{X}}
\frac{G(\boldsymbol{\Omega}(x))e^{-2\alpha_{\rm w}x}}{d^2(x)},
\label{eq:directional_location_loss}
\end{equation}
which can be found through a one-dimensional search. This criterion differs from conventional distance-based placement. A farther PA may outperform the closest one if it provides a sufficiently stronger directional gain toward the user while keeping moderate waveguide attenuation.

We next compare the optimized directional PA with a conventional fixed isotropic antenna located at ${\bf b}$. Let $d_{\rm F}=\|{\bf u}-{\bf b}\|$. The fixed-antenna SNR is $\gamma_{\rm F}=P\beta_0/(\sigma^2d_{\rm F}^2)$. Therefore, the optimized PA outperforms the fixed isotropic antenna when
\begin{equation}
\rho>\rho_{\min}
=
\frac{1}{
d_{\rm F}^2
\displaystyle
\max_{x\in\mathcal{X}}
\frac{G(\boldsymbol{\Omega}(x))e^{-2\alpha_{\rm w}x}}{d^2(x)}
}.
\label{eq:rho_min_loss}
\end{equation}
If $\rho_{\min}>1$, even perfect coupling cannot make the directional PA outperform the conventional fixed isotropic antenna under the considered geometry, waveguide attenuation, and radiation pattern.

\section{Finite Multi-PA System}
Consider $N$ activated PAs at $x_1<\cdots<x_N$. For the $n$-th PA, denote its distance and departure direction to the user by $d_n=d(x_n)$ and $\boldsymbol{\Omega}_n=\boldsymbol{\Omega}(x_n)$, respectively.  The accumulated propagation phase of the $n$-th PA-user link is
\begin{equation}
\psi_n=\frac{2\pi}{\lambda}d_n+\frac{2\pi}{\lambda_g}x_n.
\label{eq:multi_def}
\end{equation}

Let $\rho_n$ be the local coupling efficiency. Due to sequential power extraction along the waveguide, the fraction of input guided power radiated by the $n$-th PA is
\begin{equation}
p_n=\rho_n\prod_{i=1}^{n-1}(1-\rho_i).
\label{eq:pn_individual}
\end{equation}
The corresponding channel contribution is
\begin{equation}
h_n(p_n)=
\sqrt{p_nG(\boldsymbol{\Omega}_n)}
\frac{\sqrt{\beta_0}e^{-\alpha_{\rm w}x_n}}{d_n}e^{-j\psi_n}.
\label{eq:hn_uniform_loss}
\end{equation}
Define the effective nonnegative channel amplitude as
\begin{equation}
a_n=\frac{\sqrt{G(\boldsymbol{\Omega}_n)}e^{-\alpha_{\rm w}x_n}}{d_n}.
\label{eq:an_loss}
\end{equation}
Then the received SNR can be written compactly as
\begin{equation}
\gamma_N=
\frac{P\beta_0}{\sigma^2}
\left|
\sum_{n=1}^{N}\sqrt{p_n}a_ne^{-j\psi_n}
\right|^2 .
\label{eq:snr_general}
\end{equation}

To obtain coherent combining without active per-PA phase control, the PAs are selected near the favorable region determined by \eqref{eq:directional_location_loss} while satisfying the phase-matching condition \cite{ding2024}
\begin{equation}
\psi_n-\psi_1=2\pi k_n,\quad k_n\in\mathbb{Z},\quad n=2,\ldots,N,
\label{eq:phase_matching_condition}
\end{equation}
or equivalently,
\begin{equation}
\frac{d_n-d_1}{\lambda}+\frac{x_n-x_1}{\lambda_g}=k_n .
\label{eq:phase_matching_condition_normalized}
\end{equation}
In this case, \eqref{eq:snr_general} reduces to the coherent-combining form
\begin{equation}
\gamma_N=
\frac{P\beta_0}{\sigma^2}
\left(
\sum_{n=1}^{N}\sqrt{p_n}a_n
\right)^2 .
\label{eq:snr_phase_matched}
\end{equation}
This expression shows that multi-PA performance depends on both the effective amplitudes $\{a_n\}$, which include the measured radiation pattern $G(\boldsymbol{\Omega}_n)$, and the radiated power fractions $\{p_n\}$, which are controlled by the coupling efficiencies $\rho_n$.

\section{Coupling-Efficiency Optimization}

\subsection{Uniform Coupling}
We first consider the uniform-coupling case $\rho_n=\rho$, $\forall n$, which is relevant when identical PA elements sharing the same coupling design are deployed. In this case,
\begin{equation}
p_n =\rho(1-\rho)^{n-1}.
\end{equation}
Under phase matching, the optimal uniform coupling efficiency is
\begin{equation}
\rho^\star=\arg\max_{0<\rho\leq1}
\left(
\sum_{n=1}^{N}\sqrt{\rho(1-\rho)^{n-1}}a_n
\right)^2 .
\label{eq:rho_star_uniform}
\end{equation}
Let $t=\sqrt{1-\rho}$ and $B(t)=\sum_{n=1}^{N}a_nt^{n-1}$. Then \eqref{eq:rho_star_uniform} is equivalent to maximizing $(1-t^2)B^2(t)$ over $0\leq t<1$. The first-order optimality condition is
\begin{equation}
(1-t^2)B'(t)=tB(t),
\label{eq:opt_condition_compact}
\end{equation}
or, equivalently,
\begin{equation}
(1-t^2)
\sum_{n=2}^{N}(n-1)a_nt^{n-2}
=
t\sum_{n=1}^{N}a_nt^{n-1}.
\label{eq:opt_condition_expand}
\end{equation}
The optimal coupling efficiency is then $\rho^\star=1-(t^\star)^2$, where $t^\star$ is the solution of \eqref{eq:opt_condition_expand} that maximizes $(1-t^2)B^2(t)$. Since this is a one-dimensional problem, it can be solved efficiently by grid search.

\subsection{Similar Effective Channel Amplitudes}
To obtain further insight, consider the case where the phase-matched PAs are located in a small favorable region. Their directional gains, distances, and waveguide losses are then comparable, so that $a_1\approx\cdots\approx a_N=a$. In this case,
\begin{equation}
B(t)=a\frac{1-t^N}{1-t},
\end{equation}
and \eqref{eq:opt_condition_compact} reduces to
\begin{equation}
1-t^N=Nt^{N-1}(1-t^2).
\label{eq:special_condition}
\end{equation}
The optimal coupling efficiency is
\begin{equation}
\rho_N^\star=1-(t_N^\star)^2,
\label{eq:special_rho}
\end{equation}
where $t_N^\star$ is the root of \eqref{eq:special_condition}.

\begin{theorem}
For $N\geq2$, $t_N^\star$ is unique in $(0,1)$ and monotonically increases with $N$. Hence, $\rho_N^\star$ decreases with $N$.
\end{theorem}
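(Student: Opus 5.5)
The plan is to remove the trivial factor $(1-t)$ from \eqref{eq:special_condition} and rewrite the equation as "a strictly decreasing function of $t$ equals zero." Uniqueness and monotonicity in $N$ then both follow from one comparison argument.

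\textbf{Reformulation.} For $t\in(0,1)$ we have $1-t^N=(1-t)\sum_{k=0}^{N-1}t^k$ and $1-t^2=(1-t)(1+t)$. Dividing \eqref{eq:special_condition} by $(1-t)>0$ gives $\sum_{k=0}^{N-1}t^k=Nt^{N-1}(1+t)$. Dividing further by $Nt^{N-1}>0$, this is equivalent to $g_N(t)=0$, where
\begin{equation*}
g_N(t)=A_N(t)-(1+t),\qquad A_N(t)=\frac{1}{N}\sum_{m=0}^{N-1}t^{-m}.
\end{equation*}
Here $A_N(t)$ is the average of $1,t^{-1},\dots,t^{-(N-1)}$.

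\textbf{Uniqueness.} For $N\ge2$, each $t^{-m}$ with $m\ge1$ is strictly decreasing on $(0,1)$, so $A_N$ is strictly decreasing. Since $1+t$ is strictly increasing, $g_N$ is strictly decreasing and continuous on $(0,1)$. As $t\to0^+$, the term $t^{-(N-1)}/N$ gives $g_N(t)\to+\infty$, and $g_N(1)=1-2=-1<0$. Hence $g_N$ has exactly one root $t_N^\star\in(0,1)$.

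I would also check that this root is the maximizer of $(1-t^2)B^2(t)$, as the paper requires. The objective equals $a^2>0$ at $t=0$ and $0$ at $t=1$. Its derivative at $0$ is $2B(0)B'(0)=2a^2>0$. The only interior stationary point is $t_N^\star$, so it must be the global maximum on $[0,1)$.

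\textbf{Monotonicity in $N$.} The key step is to compare $A_{N+1}$ with $A_N$. We have $A_{N+1}(t)=\frac{N}{N+1}A_N(t)+\frac{1}{N+1}t^{-N}$. For $t\in(0,1)$, the new term $t^{-N}$ exceeds every term in the average $A_N(t)$, so $t^{-N}>A_N(t)$. It follows that $A_{N+1}(t)>A_N(t)$, i.e. $g_{N+1}(t)>g_N(t)$ on $(0,1)$. Evaluating at $t_N^\star$ gives $g_{N+1}(t_N^\star)>0$. Since $g_{N+1}$ is strictly decreasing with root $t_{N+1}^\star$, we conclude $t_{N+1}^\star>t_N^\star$.

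By \eqref{eq:special_rho}, $\rho_N^\star=1-(t_N^\star)^2$ is therefore strictly decreasing in $N$.

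The only delicate point is the reformulation: one must divide by $t^{N-1}$ rather than working with the polynomial form directly, because only then does the monotonic structure appear. After that, each step is elementary.
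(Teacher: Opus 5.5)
Your proof is correct, and it takes a genuinely different route from the paper's.

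The paper works with the polynomial $q_N(t)=Nt^N+(N-1)t^{N-1}-\sum_{k=0}^{N-2}t^k$. Uniqueness comes from Descartes' rule of signs: there is one sign change, and $q_N(0)<0<q_N(1)$. Monotonicity comes from evaluating $q_{N+1}(t_N^\star)=(t_N^\star)^{N-1}\bigl((N+1)(t_N^\star)^2-N\bigr)$. To sign this, the paper returns to the optimality condition $tB_N'/B_N=t^2/(1-t^2)$ and bounds $tB_N'/B_N$ by $N-1$, since it is a weighted average of $0,\dots,N-1$. This yields $(t_N^\star)^2<(N-1)/N<N/(N+1)$, so $q_{N+1}$ changes sign on $(t_N^\star,1)$.

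Your division by $Nt^{N-1}$ exposes one structure that handles both parts at once. Since $g_N$ is strictly decreasing, uniqueness needs no Descartes argument. The pointwise ordering $g_{N+1}>g_N$ (appending a term above the average raises the average) moves the root to the right without any a priori estimate on $t_N^\star$. Your reformulation also settles cleanly a point the paper leaves implicit, namely that the root is the maximizer. Indeed $(1-t^2)B'(t)-tB(t)=aNt^{N-1}g_N(t)$, so the derivative of $(1-t^2)B^2(t)$ has the sign of $g_N$, which confirms your endpoint argument.

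In exchange, the paper's route gives a quantitative by-product, $(t_N^\star)^2<(N-1)/N$, i.e., $\rho_N^\star>1/N$. It also stays in the polynomial form used for the bisection remark.
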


\begin{IEEEproof}
Rewrite \eqref{eq:special_condition} as
\begin{equation}
\sum_{k=0}^{N-1}t^k=Nt^{N-1}(1+t).
\end{equation}
Define
\begin{equation}
q_N(t)
=
Nt^{N-1}(1+t)-\sum_{k=0}^{N-1}t^k=Nt^N+(N-1)t^{N-1}-\sum_{k=0}^{N-2}t^k .
\end{equation}
Then $t_N^\star$ is the unique root of $q_N(t)=0$ in $(0,1)$ for $N\geq2$. This uniqueness follows from the Descartes’ rule of signs \cite{anderson1998descartes} because $q_N(t)$ has exactly one sign change in its polynomial coefficients, while $q_N(0)<0$ and $q_N(1)>0$.

Next, evaluating $q_{N+1}(t)$ at $t=t_N^\star$ gives
\begin{subequations}
\begin{align}
    q_{N+1}(t_N^\star)
    &=
    (N+1)(t_N^\star)^N(1+t_N^\star)
    -\sum_{k=0}^{N}(t_N^\star)^k \\ 
    &=
    (t_N^\star)^{N-1}
    \left((N+1)(t_N^\star)^2-N\right).
\end{align} \label{eq:qNplus_eval}
\end{subequations}
    
We now show that $q_{N+1}(t_N^\star)<0$. From the optimality condition \eqref{eq:opt_condition_compact}, we have
\begin{equation}
\frac{t_N^\star B_N'(t_N^\star)}
{B_N(t_N^\star)}
=
\frac{(t_N^\star)^2}{1-(t_N^\star)^2}.
\end{equation}
Since
\begin{equation}
\frac{t B_N'(t)}{B_N(t)}
=
\frac{\sum_{k=0}^{N-1}k t^k}
{\sum_{k=0}^{N-1}t^k} < N-1,
\end{equation}
we have
\begin{equation}
\frac{(t_N^\star)^2}{1-(t_N^\star)^2}<N-1,
\end{equation}
which implies
\begin{equation}
(t_N^\star)^2<\frac{N-1}{N}<\frac{N}{N+1}.
\end{equation}
Substituting this inequality into \eqref{eq:qNplus_eval} yields
\begin{equation}
q_{N+1}(t_N^\star)<0.
\end{equation}
Since $q_{N+1}(t)$ has a unique root in $(0,1)$ and $q_{N+1}(1)>0$, its root must satisfy
\begin{equation}
t_{N+1}^\star>t_N^\star.
\end{equation}
Therefore, $t_N^\star$ is monotonically increasing with $N$, and thus, $\rho_N^\star$ decreases with $N$ based on \eqref{eq:special_rho}.
\end{IEEEproof}

\begin{remark}
The unique root $t_N^\star$ can be obtained by bisection with complexity $\mathcal{O}(\log(1/\epsilon))$, where $\epsilon$ is the prescribed numerical tolerance. Since $t_N^\star$ increases with $N$, a smaller uniform coupling efficiency is preferred when more phase-matched PAs participate in coherent combining.
\end{remark}

\subsection{Independently Controllable Coupling}
We now allow each PA to have an independently controlled coupling efficiency. Since residual power after the last PA does not contribute to the received signal, the optimal radiated fractions satisfy
\begin{equation}
    \sum_{n=1}^{N}p_n=1. 
    \label{eq:sum_power_2}
\end{equation}

Under phase matching, maximizing the SNR is equivalent to
\begin{equation}
\max_{\{p_n\}}
\sum_{n=1}^{N}a_n\sqrt{p_n},
\quad
{\rm s.t.}\quad
\sum_{n=1}^{N}p_n=1,\quad p_n\geq0.
\label{eq:power_fraction_opt}
\end{equation}
By the Cauchy-Schwarz inequality,
\begin{equation}
    \left(
    \sum_{n=1}^{N}a_n\sqrt{p_n}
    \right)^2
    \leq
    \left(
    \sum_{n=1}^{N}a_n^2
    \right)
    \left(
    \sum_{n=1}^{N}p_n
    \right).
\end{equation}
Using \eqref{eq:sum_power_2}, the upper bound is
\begin{equation}
    \left(
    \sum_{n=1}^{N}a_n\sqrt{p_n}
    \right)^2
    \leq
    \sum_{n=1}^{N}a_n^2.
\end{equation}
The equality holds when
%\begin{equation}
%    \sqrt{p_n^\star}
%    =
%    \frac{a_n}{\sqrt{\sum_{i=1}^{N}a_i^2}},
%\end{equation}
%or equivalently,
\begin{equation}
    p_n^\star
    =
    \frac{a_n^2}{\sum_{i=1}^{N}a_i^2}.
    \label{eq:pn_star}
\end{equation}

Thus, PAs with stronger directional gain, smaller waveguide attenuation, and shorter PA-user distance radiate larger fractions of the input power. From \eqref{eq:pn_individual}, the corresponding local coupling efficiency is
\begin{equation}
\rho_n^\star
=
\frac{p_n^\star}{1-\sum_{i=1}^{n-1}p_i^\star}
=
\frac{a_n^2}{\sum_{i=n}^{N}a_i^2},
\label{eq:rho_n_star}
\end{equation}
and $\rho_N^\star=1$, i.e., the last PA radiates all remaining guided power. The resulting maximum SNR is
\begin{equation}
\gamma_N^\star=
\frac{P\beta_0}{\sigma^2}
\sum_{n=1}^{N}a_n^2 .
\label{eq:max_snr_individual}
\end{equation}

\begin{figure}[t]
\centering
{\includegraphics[width=0.5\textwidth]{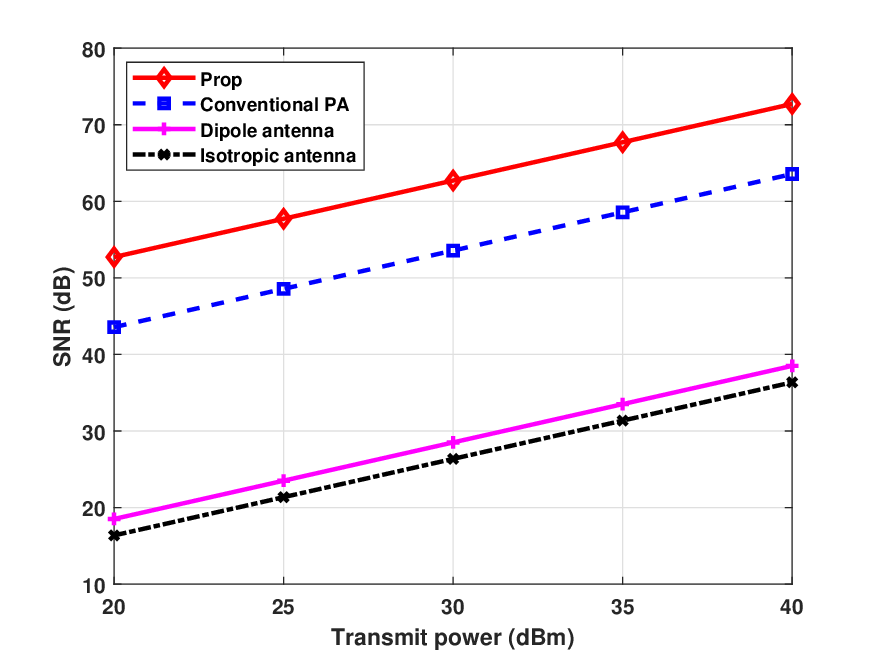}}
\caption{SNR (dB) versus transmit power.}
\label{fig:power}
\end{figure}

\begin{figure}[t]
\centering
{\includegraphics[width=0.5\textwidth]{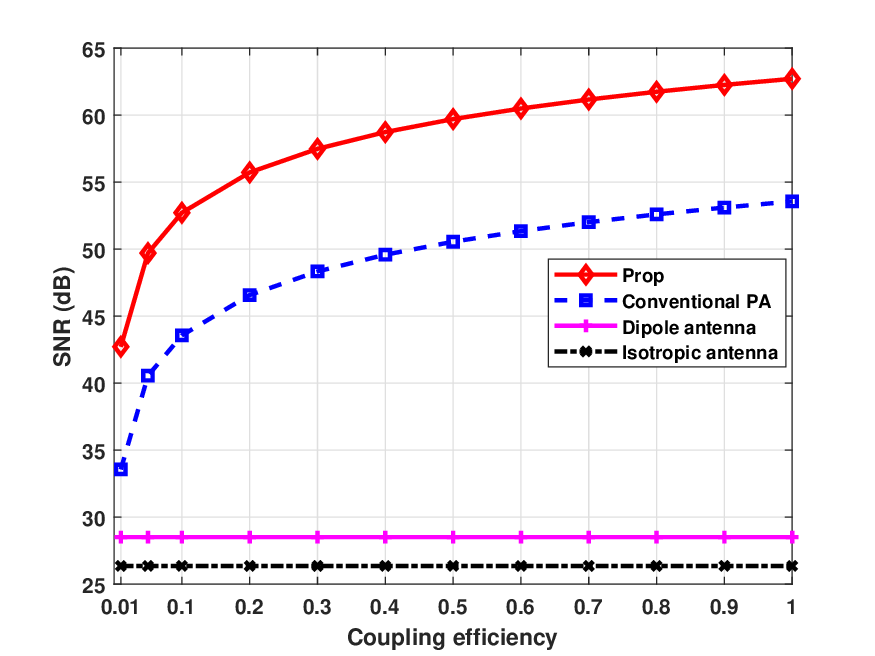}}
\caption{SNR (dB) versus coupling efficiency $\rho$.}
\label{rho_single}
\end{figure}

\begin{figure}[t]
\centering
{\includegraphics[width=0.5\textwidth]{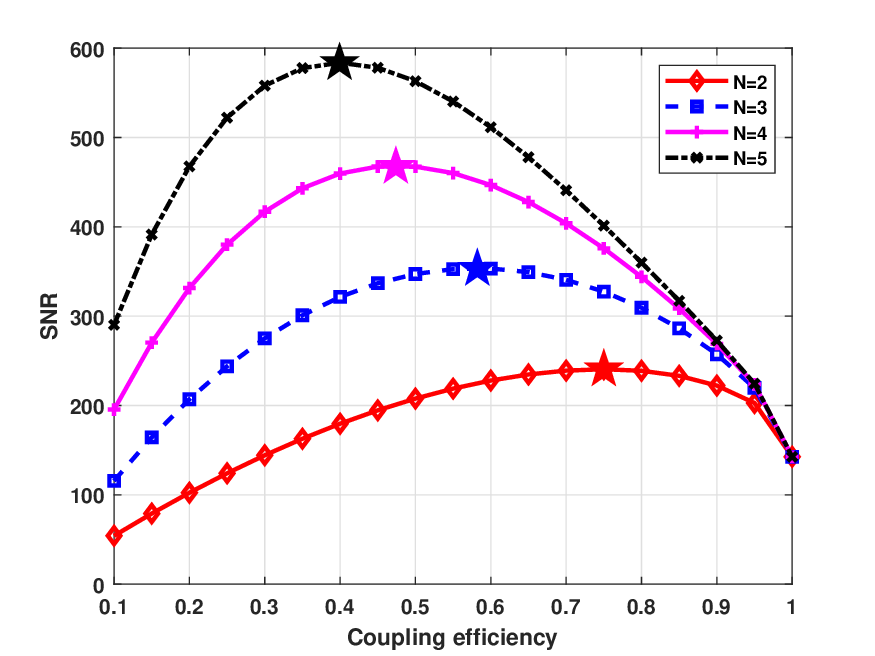}}
\caption{SNR versus coupling efficiency $\rho$.}
\label{rho_multi}
\end{figure}

\section{Numerical Results}
This section validates the analytical results and highlights the impact of measured radiation patterns and coupling-efficiency design. The radiation pattern is instantiated using the measured square-PA radiation-pattern data reported in \cite{Li_COMML26}, where the PA and user are assumed to have the same height. Unless otherwise stated, the carrier frequency is $60$ GHz, the waveguide length is $100$ m, and the user is randomly located in a $100\times 40$ m$^2$ rectangular area. The waveguide attenuation coefficient is $\alpha_{\rm w}=0.005$ \cite{tyrovolas2025}, the default transmit power is $30$ dBm, and the noise power is $-94$ dBm. All results are averaged over $10^3$ random realizations. 

Figs.~\ref{fig:power} and \ref{rho_single} compare the proposed measured-radiation-pattern-aware PA scheme with a conventional PA placement benchmark, where the PA position is optimized only according to waveguide attenuation and free-space path loss.
For fairness, after the benchmark location is selected, its SNR is evaluated using the same externally obtained measured square-PA radiation pattern as the proposed scheme. Hence, the performance gap reflects the benefit of using the measured radiation pattern in the placement decision. Two fixed-antenna benchmarks, namely isotropic and dipole antennas, are also included and assumed to be located at the feeding point. 
%Figs.~\ref{fig:power} and \ref{rho_single} compare the proposed measurement-aware PA scheme with a conventional PA benchmark, where the PA position is optimized only according to waveguide attenuation and free-space path loss. Two fixed-antenna benchmarks, namely isotropic and dipole antennas, are also included. 
As shown in Fig.~\ref{fig:power}, the SNR increases with the transmit power for all schemes. The proposed scheme achieves a clear gain over the conventional PA benchmark, which confirms that the measured radiation pattern should be included in PA placement. The PA-based schemes also outperform the fixed-antenna benchmarks because of their flexible placement and directional radiation gain.

Fig.~\ref{rho_single} shows the SNR versus the coupling efficiency for the single-PA case. The SNR of the PA schemes increases with $\rho$ because more guided power is radiated by the activated PA. Under the considered setup, the optimized directional PA dominates the fixed-antenna benchmarks, showing that coupling efficiency and radiation direction jointly affect the achievable SNR.

Fig.~\ref{rho_multi} shows the SNR versus the uniform coupling efficiency for different numbers of PAs. The stars mark the optimal coupling efficiencies predicted by \eqref{eq:special_rho}. For each $N$, the SNR first increases and then decreases with $\rho$, reflecting the tradeoff between extracting power at earlier PAs and preserving guided power for downstream coherent combining. Moreover, the optimal $\rho$ decreases as $N$ increases, which verifies Theorem~1.

\section{Conclusion}
This paper studied measured-radiation-pattern-aware PA systems with explicit coupling-efficiency optimization. By incorporating directional radiation gain, waveguide attenuation, and sequential power extraction, we showed that PA placement should not be determined by distance alone. For the single-PA case, we derived the optimal placement criterion and the coupling-efficiency threshold required to outperform a fixed isotropic antenna. For the finite multi-PA case, we considered main-beam-aware and phase-matched placement. Under uniform coupling, the optimal coupling efficiency was characterized by a one-dimensional condition, and for similar effective channel amplitudes, it was shown to decrease with the number of activated PAs. For independently controllable coupling efficiencies, a closed-form solution was obtained, where the optimal radiated power fraction is proportional to the directional channel power gain. These results demonstrate that measured radiation patterns and coupling-efficiency design are both important for practical PA system optimization.

\section{Acknowledgment}
We would like to express our gratitude to the authors of \cite{Li_COMML26} for sharing their raw measurement data with us. 

\bibliographystyle{IEEEtran}
\bibliography{biblio}

\balance

\end{document}